\newtheorem{theorem}{Theorem}[section]
\newtheorem{definition}{Definition}
\newtheorem{claim}{Claim}
\newtheorem{lemma}[theorem]{Lemma}
\newtheorem{example}{Example}
\newcommand{\qed}{\mbox{\ \ \ }\rule{6pt}{7pt} \bigskip}
\renewcommand{\comment}[1]{}
\newenvironment{proof}{\noindent{\em Proof:}}{\hfill\qed}
\newcommand{\argmax}{\operatorname{argmax}}
\newcommand{\val}[1]{v_{#1}}
\newcommand{\bid}[1]{b_{#1}}
\newcommand{\agentval}[2]{#1^{#2}}
\begin{document}
\title{Coopetitive Ad Auctions}

\author{Darrell Hoy\thanks{Northwestern University, Evanston, IL. This work was done while the author was an intern at eBay Research Labs. }
\and Kamal Jain\thanks{eBay Research Labs, San Jose, CA.}
\and Christopher A. Wilkens\thanks{University of California at Berkeley, Berkeley, CA. This work was done while the author was an intern at eBay Research Labs. }
}

\date{}
\maketitle{}

\thispagestyle{empty}

\begin{abstract} A single advertisement often benefits many parties, for example, an ad for a Samsung laptop benefits Microsoft. We study this phenomenon in search advertising auctions and show that standard solutions, including the status quo ignorance of mutual benefit and a benefit-aware Vickrey-Clarke-Groves  mechanism, perform poorly. In contrast, we show that an appropriate first-price auction has nice equilibria in a single-slot ad auction --- all equilibria that satisfy a natural cooperative envy-freeness condition select the welfare-maximizing ad and satisfy an intuitive lower-bound on revenue.

\end{abstract}

\section{Introduction}

In 1991, Intel launched its ``Intel Inside'' advertising campaign and forever changed the way people buy computers. Previously, buyers only considered hardware insofar as it affected the software that would run on their new machine. The ``Intel Inside'' campaign aimed to change that behavior --- Intel coordinated with PC vendors to advertise not just the processor's capabilities but the Intel brand. Twenty years later, the ``Intel Inside'' mark has become one of the most recognized in the tech industry, their signature five-note chime is known worldwide, and, most importantly, buyers think about the brand of processor inside their computers~\cite{II}.

Intel's benefit from the ``Intel Inside'' campaign is an obvious example of a general phenomenon. Intel clearly has a vested interest in the sale of computers containing its products and, in an amortized sense, derives a specific benefit from every sale. This exemplifies a fundamental aspect of marketing: {\em a single advertisement often benefits many different companies.} Companies commonly recognize this benefit and team up with partners in so-called {\em cooperative advertising} agreements similar to the ``Intel Inside'' campaign --- in 2000, an estimated \$15 billion was spent on cooperative advertising in the United States alone~\cite{N06}. This phenomenon is also recognized in the operations research and marketing fields where it has been modeled using a variety of Stackelberg and dynamic games~\cite{B72,HPS09}. However, one key question seems to have gone unasked in both the practical and theoretical realms: {\em how can the companies who sell advertising space exploit the broad benefit of a single ad?} We study this question in the context of online ad auctions.

Our main results show that an auctioneer may improve both his own revenue and consumers' welfare by using an auction that allows cooperation among the advertisers bidding on a single ad but maintains competition between ads --- we call this a coopetitive\footnote{Coopetition is a business term describing an environment where the same parties are simultaneously cooperating in some areas and competing in others~\cite{W}.} ad auction. We first show that conventional cooperative advertising contracts and the Vickrey-Clarke-Groves (VCG) mechanism may perform poorly --- Figure~\ref{fig:silsearch} shows a real query in which Google's current ad auction produces an unreasonable and unsustainable outcome. In contrast, we show that equilibria of the first-price auction which satisfy a cooperative envy-freeness condition have a natural performance guarantee similar to that of a second-price auction.

\begin{figure}[htb]\centering
\includegraphics[scale=0.5]{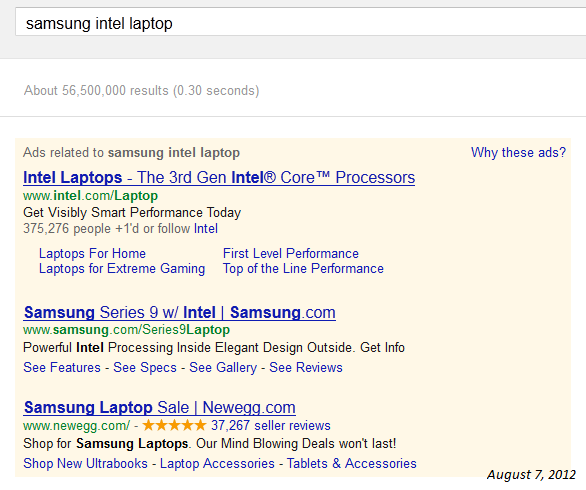}
\caption{A Google search for ``samsung intel laptop'' illustrates the pitfalls of ignoring the mutual benefits of an ad --- the top ad for ``Intel Laptops'' is competing against the ad below it for ``Samsung...w/ Intel.'' Google is charging Intel a premium to have its ad shown on top; however, Intel should be happy if a user clicks on the Samsung ad (and possibly even the Newegg ad) and thus should be unwilling to pay this premium.}\label{fig:silsearch}
\end{figure}

\paragraph{Mutual Benefit and Cooperative Advertising.} Most advertisements (directly or indirectly) benefit many parties. For example, an iPhone ad benefits cell phone providers, a Samsung ad for a Windows laptop benefits Microsoft, and an ad for the Boston Red Sox benefits a bar across the street from Fenway Park. Moreover, these secondary benefits are often significant --- when Best Buy sells a Samsung laptop running Windows, Microsoft makes more money from the laptop's sale than either Best Buy or Samsung. As a result, companies have strong incentives to share advertising costs, particularly when competing against a more integrated adversary as when Microsoft and Samsung compete against Apple.

The status quo technique for pooling advertising dollars is an ad-hoc system of external contracts. In such external contracts, one company agrees to pay a portion of another's advertising costs when its branding is included in the ads. In the Intel example, Intel agrees to pay a percentage of the advertising costs when Samsung or Dell include the ``Intel Inside'' branding in their ad.

In the research community, cooperative advertising has mostly been studied in marketing and operations research. The setting is typically modeled as a Stackelberg game in which an upstream manufacturer makes an offer to downstream manufacturers or retailers~\cite{B72,K98}, sometimes incorporating dynamic components~\cite{HPS09}. Other recent research has relaxed the assumption that the upstream manufacturer is the first mover and considers external contracts based on alternative bargaining solutions~\cite{HLM02}.

\paragraph{Pay-Per-Click Ad Auctions.} Search advertising today is sold through a pay-per-click (PPC) ad auction. In the standard setting, each bidder comes to the auction with its own ad and places a bid in terms of its willingness to pay for each click. The auctioneer subsequently assigns ads to slots on the web page --- bidders effectively compete for slots --- but only charges a bidder when his ad is clicked.

In a standard PPC ad auction, the only way for advertisers to share costs is through external contracts. Unfortunately, this creates undesirable results in an auction format. First, a manufacturer may compete with itself. For example, Figure~\ref{fig:silsearch} shows a Google query for ``samsung intel laptop'' in which an ad from Intel is shown above a Samsung ad that explicitly advertises Intel-based laptops. In general, Intel will be happy if the buyer visits Samsung's site with the intent to buy an Intel-based laptop. Thus, even if Intel would prefer the user to click on its own ad instead of Samsung's, Intel should be unhappy paying the premium required to beat Samsung in the auction.

A second downside to external cooperative advertising agreements is that downstream producers may face a moral hazard. In some situations, the downstream producer may be incentivised to overspend on advertising and waste the upstream manufacturer's money. In the worst case, the upstream manufacturer will refuse to participate and cooperation will collapse entirely. We give examples of both phenomena in Section~\ref{sec:ex}.

The theory of mechanism design suggests another approach: bids should express complex preferences over ads. In particular, it should be possible for multiple bidders to value a single ad and for a single bidder to value multiple ads. Again, however, some natural mechanisms may have poor performance. For example, in Section~\ref{sec:ex} we show that a Vickrey-Clarke-Groves (VCG) auction may generate little or no revenue --- the implicit cooperation among advertisers (which is desired) mimics the kind of strategic collusion that is known to reduce revenue in VCG mechanisms.

\paragraph{The First-Price Coopetitive Ad Auction.} In contrast to the pitfalls of external contracts and VCG mechanisms, our results show that first-price auctions which consider bidders' complex preferences have nice equilibria in single slot settings. First, we generalize the intuition of a second-price auction to give a natural lower-bound on the revenue that the auctioneer should expect. Second, analogous to the results of Edelman et al.~\cite{EOS07} for the commonly used generalized second-price (GSP) auction, we show that all equilibria satisfying a natural cooperative envy-freeness condition maximize welfare while satisfying this revenue lower-bound. We also show that the cooperative envy-free equilibria dominate VCG in terms of revenue --- each bidder is individually paying more in the first-price auction than in the VCG mechanism.

Next, we show that such cooperative envy-free equilibria can be found easily. The envy-freeness constraints define a polytope of which the equilibria form the Pareto frontier, suggesting a family of convex programs for computing equilibria. Finally, we specifically identify the egalitarian equilibrium and give an efficient algorithm for computing it.

First-price auctions are often preferable to VCG and GSP auctions. Most singificantly, they are more transparent --- since prices precisely correspond to bids, bidders do not face uncertainty in their payments and there is no opportunity for the auctioneer to manipulate the auction, particularly by learning in a repeated setting. For example, eBay historically offered a VCG-like auction for selling multiple identical items. The auction was sufficiently disliked by the bidders that eBay no longer offers it as an option. Today, eBay sells single items through a transparent ascending price auction. In the specific case of ad auctions, Overture's experience has discredited the first-price auction; however, this could have been because Overture's auction did not have a pure strategy equilibrium.

\paragraph{Related Work.} In addition to the afore-mentioned work on cooperative advertising and online ad auctions, our work is related to auctions with externalities. Whereas simple auction models assume players are indifferent to the bundle received by another player, in reality there may be externalities, i.e. players may care about the bundles received by other agents. Incorporating externalities has been studied in both the economics and computer science literatures (e.g.~\cite{KMSW10,JMS99}), typically producing a mechanism in which bidders can express a different value for every possible outcome. Our ad auctions may not always have nice interpretations in terms of externalities; however, like mechanisms with externalities they cannot be expressed in the standard bidding language and, in the extreme, degenerate to a mechanism which requires  bidders to express a value for every possible outcome.

\newcommand{\ex}{\mathbf{E}}

\section{The Coopetitive Ad Auction}\label{sec:prelim}

The coopetitive advertising model generalizes the standard pay-per-click (PPC) advertising auction. As in the standard auction, the auctioneer must choose which of $m$ competing ads to show in $s$ slots. Each advertiser derives a value of $v$ from a click on one of its ads and has a utility that is quasilinear in money, i.e. $u(p)=v-p$ when the advertiser gets a click and pays $p$. The likelihood that a user clicks on ad $j$ in slot $k$, called the click-through-rate (CTR), is given by $c_{j,k}$. Hence, the expected utility of bidder whose ad is shown in slot with CTR $c$ is $\ex[u_i(p_i)]=c(v-p)$. In this paper, we focus on the special case with one slot ($s=1$) where CTRs are independent of the ad. In this case, the CTR can be taken to be 1 without loss of generality, so we drop CTRs for the remainder of the paper.

The new feature of the coopetitive model is that an advertiser can derive value from clicks on multiple ads. In general, advertiser $i$'s value for a click $v_{i,j}$ depends on the particular ad $j$; however, for the sake of presentation we consider a simpler model. An advertisement $j\in\{1,\dots,m\}$ is defined by a publicly-known set of advertisers $S_j\subseteq[n]$ who all derive value from a click on advertisement $j$. Advertiser $i$ derives the same value $v_i$ from a click on any ad $j$ where $i\in S_j$ (an advertiser does not benefit if $i\not\in S_j$). We use $T$ to denote the ad set of bidders in the ad with the maximum total value, i.e. $T=\argmax_{S_j}\sum_{i\in S_j}v_i$ (in case of a tie, $T$ denotes the particular winning ad chosen by the auction). All our results generalize to the more complicated $v_{i,j}$ setting by considering equilibria where a bidder bids for the same surplus $v_{i,j}-b_{i,j}$ in each ad.

We will often use shorthand for examples. Our notation is itself best described by an example:
\[\{ (\agentval{A}{2}, \agentval{B}{1}, \agentval{C}{3}), (\agentval{A}{2}, \agentval{B}{1}), (\agentval{C}{3})\}\]
This denotes an auction with three advertisers $A$, $B$, and $C$ whose values are 2, 1, and 3 respectively. Everyone derives value when the first ad $(\agentval{A}{2},\agentval{B}{1},\agentval{C}{3})$ is shown, while only advertisers $A$ and $B$ benefit from the second ad and only advertiser $C$ benefits from the third.

\paragraph{The External Contracts Mechanism.} External contracts are the way advertisers currently cooperate: each ad is ``owned'' by a single bidder, and any party wishing to increase the bid on an ad must negotiate an external contract with the ad's owner.

We model this as a standard VCG PPC ad auction in which advertiser $i$ can (before the auction is run) commit to pay an $\alpha_i$ fraction of the cost each time one of its ads is clicked, up to a maximum $\beta_i$. This payment goes directly to the owner $o(j)$ of the clicked ad $j$. Thus, the utility of advertiser $i$ would be $u_i=v_i-\min(\alpha_i p_{o(i)}, \beta_i)$, while the utility of the ads owner $o(j)$ would be $u_{o(j)}=v_{o(j)}-p_{o(j)}+\min(\alpha_i p_{o(j)},\beta_i)$.

\paragraph{The VCG Mechanism.} In our setting with a single slot, the Vickrey-Clarke-Groves (VCG) mechanism chooses the ad $j$ maximizing $\sum_{i\in S_j}v_i$ and charges bidder $i$ the minimum value he needed to have to be in the winning ad. (Bidders submit their true values $v_i$ because VCG is incentive compatible.)

\paragraph{The First-Price Auction.} In a first-price auction, each advertiser submits a bid $b_i$. The auctioneer displays the ad $j$ maximizing $\sum_{i\in S_j}b_i$ and charges each bidder in the winning ad $p_i=b_i$ when the ad is clicked.

\section{Pitfalls of Standard Mechanisms}\label{sec:ex}

Many standard mechanisms behave poorly with respect to advertisers coopetitive valuations. For example, Figure~\ref{fig:silsearch} shows how Google's current system caused Intel to compete against itself.

\paragraph{External Contracts.} A few pitfalls specifically arise in the current system of external contracts. First, if contracts are made with insufficient granularity, an advertiser might easily compete with itself:
\begin{example}
	Consider the following two single-slot ad auctions:
\[\{(\agentval{S}{3},\agentval{M}{10}), (\agentval{D}{2},\agentval{M}{10}), (\agentval{A}{11})\}\quad\mbox{and}\quad\{(\agentval{S}{3},\agentval{M}{10}), (\agentval{D}{2},\agentval{M}{10})\}\enspace.\]
In the first ad auction, $M$ will happily contribute advertising funds to help beat $A$; however, in the second auction $M$ wins regardless. The only effect of $M$'s dollars in the second auction is to fund a useless bidding war between the $SM$ and $DM$ ads.
	\end{example}
Ideally, $M$ would only contribute advertising funds in the first auction. However, since the granularity of real cooperative advertising contracts is somewhat limited, this example demonstrates a legitimate concern.

Additionally, the advertiser receiving the external contract faces a moral hazard: he is often incentivised to overspend the money of his advertising partner. In equilibrium, the result is that cooperation collapses:
\begin{example}
	Consider the following three ads with four interested parties:
\[\{(\agentval{S}{3},\agentval{M}{10}), (\agentval{D}{2},\agentval{M}{10}), (\agentval{A}{11})\}\enspace.\]
Suppose there are three ad slots with CTR's 0.1, 0.08, and 0.05 respectively (if a bidder appears in multiple ads, their likelihood of a click is the sum of the likelihoods for those ads).

In the external contracts model described in Section~\ref{sec:prelim}, $M$ will not offer a cooperative advertising contract in equilibrium. As a result, the auction degenerates to $\{(\agentval{S}{3}), (\agentval{D}{2}), (\agentval{A}{11})\}$. Not only will revenue be hurt significantly, but the auction will be inefficient because $(\agentval{A}{11})$ wins the top slot.
	\end{example}
We omit the calculations.

\paragraph{The VCG Mechanism.} The VCG mechanism charges a player based on the externality it imposes on other users, i.e. the welfare that others lose because of its presence. A downside of the VCG mechanism is that it may not generate any revenue. It is well-known that collusion has a negative effect on revenue in the VCG mechanism~\cite{AM06} --- while such collusion can even be illegal in other settings, in coopetitive ad auctions we specifically want advertisers to cooperate on ads that are of mutual benefit.

For example, two players can make their payments zero by simultaneously claim sufficiently large values for the winning outcome $o$. If the players' bids are sufficiently large that $o$ is still the welfare-maximizing outcome even if one of the pair were removed, the externality that each player imposes is zero, and nobody pays anything. This happens easily in coopetitive ad auctions. For example:
\begin{example}
In the following single slot ad auction $\{(\agentval{A}{1}, \agentval{B}{1},\agentval{C}{1},\agentval{D}{1}), (\agentval{E}{2.9})\}$, the VCG mechanism will show the first ad in the slot, but nobody pays anything.
\end{example}
In this case, the $ABCD$ ad remains the best ad even if a single bidder is removed. As a result, nobody pays anything. Such a scenario can occur naturally if the winning ad is valued by many small bidders.

The weak revenue of the VCG mechanism is not limited to extremes like the above example. In general,  payments will be lower than an auctioneer might hope:
\begin{example}
In the following single slot ad auction $\{(\agentval{A}{2}, \agentval{B}{2}), (\agentval{E}{3})\}$, the VCG mechanism will show the first ad in the slot and players $A$ and $B$ will each pay 1.
\end{example}
Intuition says that the the auctioneer should hope to make $A$ and $B$ pay a total of 3, since that is the total bid required to beat $E$. However, the total VCG payment for the first ad is only 2. In contrast, we show that a coopetitive first-price auction will indeed generate a revenue of 3.

\section{First-Price Auction Equilibria}
	In this section, we consider behavior of first-price auctions in the Coopetitive Ad Auction problem. Coopetitive auctions differ from the standard setting in that advertisers
	involved in the same ad are expected to cooperate on that ad; but remain competitors across other ads. In particular, if two losing advertisers can jointly raise their bids and win, we expect this to happen. 
	
	We thus restrict our attention to equilibria that are \emph{cooperatively envy-free}. By this, we mean that no cooperating advertisers in a losing ad can jointly raise
	their bids to beat out the presently winning ad. If a losing ad shares bidders with the winning ad, then it is only the advertisers not in either ad which affect this condition.

	\begin{definition} The bids $(b_i)_{i\in T}$ for the winning ad $T$ are cooperatively envy-free (CEF) if and only if for all alternate ads $S_j$, 
		\begin{eqnarray}
			\sum_{i\in T\setminus S_j} b_i \geq \sum_{i \in S_j\setminus T} v_i \
		\end{eqnarray}
	\end{definition}

	We also insist that the agents bidding are individually rational (IR) and bids are non-negative, that for each advertiser $i$, $0\leq \bid{i} \leq \val{i}$. 
	
	Combining IR and CEF yields efficiency - if the bids of agents in $T\setminus S_j$ are at least the values of agents in $S_j\setminus T$, then so too are the values.
	
	\begin{lemma}  If the bids $(b_i)_{i\in T}$ for the winning ad $T$ are IR and CEF, then $T$ is the efficient winning ad.
	\end{lemma}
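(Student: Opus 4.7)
The plan is to show that for any competing ad $S_j$, the total value of $T$ is at least the total value of $S_j$, which is precisely the definition of efficiency. The argument is short: chain the CEF inequality with individual rationality applied to the bidders in $T\setminus S_j$, then symmetrize by adding the shared bidders' values to both sides.

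More concretely, I would first fix an arbitrary alternate ad $S_j$ and write down the CEF inequality
\[\sum_{i\in T\setminus S_j} b_i \;\geq\; \sum_{i\in S_j\setminus T} v_i.\]
Next, I would apply IR, which gives $v_i \geq b_i$ for every $i\in T\setminus S_j$, and hence
\[\sum_{i\in T\setminus S_j} v_i \;\geq\; \sum_{i\in T\setminus S_j} b_i \;\geq\; \sum_{i\in S_j\setminus T} v_i.\]
Then I would add $\sum_{i\in T\cap S_j} v_i$ to both sides to obtain $\sum_{i\in T} v_i \geq \sum_{i\in S_j} v_i$. Since $S_j$ was arbitrary, $T$ is the welfare-maximizing ad, so the auction (which selects $T$ as its winner because bidders are cooperatively envy-free at this bid profile) outputs the efficient allocation.

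There is no real obstacle here: the only subtle point is noticing that CEF is stated in terms of bids on one side and values on the other, so IR is exactly the ingredient needed to upgrade the left-hand side from bids to values and close the gap. The argument goes through without needing any assumption about ties (a tie in total value would simply mean $S_j$ is also efficient, consistent with the tie-breaking convention for $T$ introduced in Section~\ref{sec:prelim}).
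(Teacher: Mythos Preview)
Your proof is correct and is essentially the same as the paper's argument: the paper simply notes that ``if the bids of agents in $T\setminus S_j$ are at least the values of agents in $S_j\setminus T$, then so too are the values,'' which is exactly your chain of CEF followed by IR, with the shared bidders added back symmetrically.
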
			

	These CEF, IR and non-negativity conditions form a polytope of possible payments associated with the correct winning ad. Not all of these are equilibria - those will
	instead form the Pareto frontier of the polytope. On this frontier, there can still be many possible equilibria. 
	Consider the following ad auction: $\{(\agentval{A}{100}, \agentval{B}{100}), (\agentval{C}{99})\}$,
	Every set of bids $b_A = x, b_B = 99-x$ for $0\leq x \leq 99$ constitute an equilibrium.
	
	Now, consider the equilibrium conditions. By CEF, no losing advertisers will be able to raise their bids and affect the outcome, 
	hence we need worry only about the winning bidders lowering their bids. 

	\begin{lemma}\label{equilibriumotherset}
		The IR, CEF, non-negative bids $(\bid{i})$ form an equilibrium for the winning ad $T$ if and only if for each bidder 
		$k$, $\bid{k}=0$ or there exists an ad $S_j\not\ni k$ s.t. 
		\begin{equation}
			\sum_{i\in T} \bid{i} = \sum_{i\in S_j} \bid{i}
		\end{equation}
	\end{lemma}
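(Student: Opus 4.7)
The plan is to prove both directions of the equivalence by directly analyzing unilateral best responses. An immediate simplification is that CEF together with IR rules out every deviation by a bidder $k \notin T$: for any ad $S_j$ containing $k$, even raising $b_k$ to $v_k$ paired with maximum raises by the rest of $S_j \setminus T$ gives $\sum_{i \in S_j \setminus T} b_i \leq \sum_{i \in S_j \setminus T} v_i \leq \sum_{i \in T \setminus S_j} b_i$, so $S_j$ cannot overtake $T$. Moreover losing bidders pay nothing regardless of their bid, so without loss of generality one may set $b_k = 0$ for $k \notin T$, trivially satisfying the stated condition. All the substantive content therefore concerns bidders inside $T$.

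For the \emph{if} direction, assume the hypotheses and fix a bidder $k \in T$. Raising $b_k$ cannot help: $T$ is still the winner (the total of $T$ only grows, while totals of other ads grow by the same amount or not at all), and $k$'s payment strictly increases. Lowering $b_k$ cannot help either: if $b_k = 0$ this is impossible, and otherwise the hypothesis furnishes an $S_j \not\ni k$ with $\sum_{i \in T} b_i = \sum_{i \in S_j} b_i$, so shading $b_k$ by any $\varepsilon > 0$ leaves $\sum_{i \in S_j} b_i$ untouched while strictly reducing $\sum_{i \in T} b_i$; then $S_j$ overtakes $T$ and wins, and since $k \notin S_j$ bidder $k$ obtains $0$ utility, no better than the previous $v_k - b_k \geq 0$ (by IR).

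For the \emph{only if} direction, suppose the bids form an equilibrium and fix $k \in T$ with $b_k > 0$. By CEF, $\sum_{i \in T} b_i \geq \sum_{i \in S_j} b_i$ for every ad $S_j$. Assume for contradiction that every $S_j \not\ni k$ satisfies strict inequality. Shading $b_k$ by $\varepsilon$ decreases $\sum_{i \in T} b_i$ and each $\sum_{i \in S_j} b_i$ with $k \in S_j$ by the same amount, preserving those comparisons, while for $S_j \not\ni k$ only $T$'s sum decreases. Choosing $0 < \varepsilon < b_k$ strictly less than $\min_{S_j \not\ni k}\bigl(\sum_{i \in T} b_i - \sum_{i \in S_j} b_i\bigr)$ keeps $T$ the winner and strictly raises $k$'s utility, contradicting equilibrium. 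The main obstacle I anticipate is exactly this multi-membership case --- a winner $k$ appearing in several ads --- and its resolution is precisely the observation above that $\varepsilon$-shading by $k$ moves the totals of $T$ and of every $S_j \ni k$ in lockstep, isolating the binding constraints to the ads that exclude $k$.
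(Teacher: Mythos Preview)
Your proof is correct and follows essentially the same strategy as the paper's: dismiss losing bidders via CEF, then characterize equilibrium for winning bidders by analyzing whether they can profitably shade. The paper's argument is considerably terser; your lockstep observation---that shading $b_k$ moves the totals of $T$ and of every $S_j \ni k$ identically, so only ads excluding $k$ can become binding---is a genuine clarification the paper leaves implicit, and your explicit check that raising $b_k$ is never beneficial is absent from the paper entirely.

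One quibble on the $k \notin T$ case: the lemma literally quantifies over \emph{all} bidders, and a losing bidder with $b_k > 0$ and no tied ad excluding her is already best-responding (she is indifferent across all non-winning bids), so the biconditional technically fails in the only-if direction. Your ``without loss of generality set $b_k = 0$'' normalizes this away but does so by altering the bid profile the statement is about. The paper sidesteps the issue by silently restricting its proof to winning advertisers; the cleanest reading is to interpret the condition as quantifying over $k \in T$ only, which is also how the lemma is used downstream. This is a wrinkle in the statement rather than a gap in your reasoning.
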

	\begin{proof}
		First, the `if' direction. Assume such an $S_j$ exists for every $k$. Then, were $k$ to lower his bid, $S_j$ would 
		win and $k$ would no longer be in the winning set.
		
		Consider the other direction. Assume a set of bids equilibrium bids are CEF, IR and non-negative. For every winning advertiser $k$, they must not be able to lower 
		their bids and still win - otherwise they would, and we would not be in equilibrium. Thus, for any $k$ s.t. $\bid{k}>0$, 
		there must be such a set $S_j$. 
	\end{proof}
			
	\subsection{Revenue}
		In this section, we consider the revenue behavior of equilibrium points in the polytope. First, note that the revenue is not the same for all equilibrium points - 
		this is not simply a matter of dividing a fixed payment up. As an example, consider the following three ad, five interested party setting: $\{(\agentval{A}{1}, \agentval{B}{1}, \agentval{C}{1}), (\agentval{A}{1}, \agentval{D}{1}), (\agentval{B}{1}, \agentval{E}{1})\}$. Clearly the first ad should win, but what should the payments be? 
	
	Our CEF, IR and non-negativity conditions give us the following polytope: $\bid{A} + \bid{C} \geq 1$, $\bid{B} + \bid{C} \geq 1$, $0 \leq \bid{A},\bid{B}, \bid{C} \leq 1$. 
	The set of equilibrium points includes $(1,0,1)$, $(0, 1, 0)$ and every convex combination of the two. Thus, the revenue of the equiliubrium points can range from $1$ to $2$.
	
	How would other mechanisms do? VCG will charge nothing, as no advertiser is integral to the ad being displayed. Were A and B to lie and say they are not affiliated with $C$ or $D$, 
	any second or first price mechanism would insist on a payment of $1$ from the three of them. In this example then, the revenue of our first price equilibria are lower
	bounded by VCG, and by simpler first and second price auctions.

	\begin{lemma}
	For every point in the polytope, the bid of each advertiser is at least their VCG payment.
	\end{lemma}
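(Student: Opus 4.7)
The plan is to reduce the claim to applying the CEF inequality at a single well-chosen alternate ad, after isolating a clean closed form for the VCG payment. In the single-slot setting, VCG charges $i\in T$ the minimum value $\val{i}$ would need to take for $T$ to remain welfare-maximizing. Alternate ads $S_j$ containing $i$ leave $\val{i}$ on both sides of the welfare comparison, so only ads $S_j\not\ni i$ can bind the threshold. Straightforward bookkeeping (writing $T\setminus\{i\}=(T\cap S_j)\cup((T\setminus S_j)\setminus\{i\})$ and cancelling the common $T\cap S_j$ terms) gives
\[
p_i^{\mathrm{VCG}}=\max\!\left(0,\ \max_{S_j\not\ni i}\Big[\sum_{k\in S_j\setminus T}\val{k}-\sum_{k\in(T\setminus S_j)\setminus\{i\}}\val{k}\Big]\right).
\]

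With this formula in hand, I would let $S_j^\star$ be an ad not containing $i$ that achieves the outer maximum. If the maximum is $\leq 0$, then $p_i^{\mathrm{VCG}}=0$ and the claim is immediate from the non-negativity of $\bid{i}$. Otherwise, the CEF condition applied at $S_j^\star$ reads $\sum_{k\in T\setminus S_j^\star}\bid{k}\geq \sum_{k\in S_j^\star\setminus T}\val{k}$. Because $i\in T$ but $i\not\in S_j^\star$, the bid $\bid{i}$ appears on the left, and separating it out gives
\[
\bid{i}\ \geq\ \sum_{k\in S_j^\star\setminus T}\val{k}\ -\ \sum_{k\in(T\setminus S_j^\star)\setminus\{i\}}\bid{k}.
\]
Invoking IR ($\bid{k}\leq \val{k}$) on each of the remaining winning bids upper-bounds the second sum by the corresponding sum of values, and the resulting right-hand side is exactly $p_i^{\mathrm{VCG}}$.

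The main obstacle is really just deriving the VCG payment in the symmetric-difference form above; once that is written down, the lemma follows in a single substitution step via CEF and then IR. One must be careful with the $\{i\}$ exclusions and with the cancellation of the overlap $T\cap S_j$, and must separately handle the degenerate case $p_i^{\mathrm{VCG}}=0$ using non-negativity, but there is no deeper strategic or structural argument required.
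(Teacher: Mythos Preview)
Your proposal is correct and follows essentially the same route as the paper: pick the alternate ad that would win in $i$'s absence, apply the CEF inequality there to isolate $\bid{i}$, and then invoke IR on the other winning bidders' bids to pass from $\bid{k}$ to $\val{k}$. Your write-up is more explicit than the paper's in deriving the symmetric-difference form of the VCG payment and in separating out the trivial $p_i^{\mathrm{VCG}}=0$ case via non-negativity, but the underlying argument is identical.
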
	
	\begin{proof}
	Consider advertiser $i$. Let $T$ be the winning ad, and let $S_j$ be the winning ad without $i$. If $S_j=T$, then $i$'s VCG payment is $0$, and hence we need only worry about the case that $S_j\neq T$. By our CEF constraints, we have $b_i \geq \sum_{k\in S_j\setminus T} v_k - \sum_{k\in S_j\setminus T- \{i\}} b_k \geq \sum_{k\in S_j\setminus T} v_k - \sum_{k\in S_j\setminus T- \{i\}} v_k$. The latter quantity is exactly $i$'s VCG payment, and hence every advertiser's bid is at least their VCG payment.
	\end{proof}
	
	One possible way of dealing with this is to have all members of the winning ad pretend to be
	uninterested in the losing ads, and then have them split the payment of the second highest bidding ad. Any equilibrium in the polytope will get at least the revenue of such a mechanism:

	\begin{lemma} \label{ignoringlosers}
	The revenue of any equilibrium in the polytope is at least the maximum total value of non-winning advertisers in a non-winning ad.
	\end{lemma}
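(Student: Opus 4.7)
The plan is essentially a one-step application of the CEF constraint combined with non-negativity of bids. Let $T$ denote the winning ad and let $S_j$ be any non-winning ad. I want to show that the total revenue $\sum_{i\in T}b_i$ is at least $\sum_{i\in S_j\setminus T}v_i$; taking the maximum over all non-winning $S_j$ then gives the lemma.

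First, I would write the CEF inequality for the ad $S_j$:
\[
\sum_{i\in T\setminus S_j}b_i \;\geq\; \sum_{i\in S_j\setminus T}v_i.
\]
Next, since bids are non-negative (by the IR/non-negativity hypothesis on the polytope), I decompose the sum over $T$ as
\[
\sum_{i\in T}b_i \;=\; \sum_{i\in T\setminus S_j}b_i \;+\; \sum_{i\in T\cap S_j}b_i \;\geq\; \sum_{i\in T\setminus S_j}b_i,
\]
because the second term is non-negative. Chaining these two inequalities yields the bound for the particular $S_j$, and maximizing over non-winning ads yields the stated revenue lower bound.

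There is no real obstacle here: the argument is a two-line consequence of CEF plus non-negativity. The only thing to be a little careful about is the bookkeeping of the sets $T\cap S_j$ and $T\setminus S_j$, and to note that the lemma's quantity ``total value of non-winning advertisers in a non-winning ad'' is exactly $\sum_{i\in S_j\setminus T}v_i$ (a bidder in $T\cap S_j$ is a winning advertiser and thus excluded). The conclusion holds for \emph{every} point in the polytope, not just equilibria, since neither step invoked Lemma~\ref{equilibriumotherset}.
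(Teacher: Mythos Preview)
Your argument is correct and matches the paper's approach: the paper simply states that the lemma ``follows directly from the CEF conditions,'' and you have spelled out exactly that two-line derivation (CEF plus non-negativity of the bids in $T\cap S_j$). Your observation that the bound holds for every point of the polytope, not just equilibria, is also accurate and consistent with the paper's phrasing.
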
	
	
	This follows directly from the CEF conditions. Note that this is the same as the revenue a second price auction would get if treated advertisements as single agents and removed interests of winning advertisers in losing ads.

	Thus, any first price equilibria that satisfies CEF (eg., no losing advertisers can collaborate to increase their bids and win) has good revenue - revenue that beats both VCG and a natural analogue to a second price auction.

\section{The Egalitarian Solution}
	As discussed earlier, there are many ways the winners can split payments while still satisfying our equilibrium 
	and cooperative envy-freeness constraints. In a first price auction, the exact split that we expect bidders to reach 
	will depend on the preferences and bidding dynamics of the advertisers.
	
	In this section, we consider one such split in particular --- the \emph{egalitarian bargaining solution}.
	In the egalitarian solution, the utility of the worst-off player is maximized, and on up the line. This can be defined as the equilibrium with the lexicographically maximum surplus:

	\begin{definition}
		The egalitarian solution in the coopetitive first-price auction is an equilibrium that displays the highest surplus ad and charges advertisers so that the surplus vector is lexicographically maximal when bidders are ordered in terms of increasing surplus.
	\end{definition}
	In many normal settings, this will result in all players sharing equally the surplus generated.

	\subsection{Algorithm}
	The egalitarian equilibrium can be computed by repeatedly lowering bids uniformly. The algorithm is described in Algorithm~\ref{alg:egal}.

\begin{algorithm}[tb]\label{alg:egal}
\SetKwInOut{Input}{input}\SetKwInOut{Output}{output}
\SetKwIF{WP}{ElseWP}{Otherwise}{with probability}{}{with probability}{otherwise}{end}
\SetKw{KwSet}{Set}
\SetKwRepeat{DoUntil}{do}{until}
\Input{A coopetitive ad auction problem.}
\Output{The egalitarian equilibrium bids $b_i$.}
\DontPrintSemicolon
\BlankLine
\nl Set the bids of all advertisers to their values. Call $T$ the winning ad.\;
\nl Lower bids of advertisers in the winning ad $T$ uniformly until some bidder $i$ reaches $b_i=0$ or a constraint $\sum_{i\in T} b_i \geq \sum_{i\in S_j} b_i$ would be violated for some ad $S_j$.\;
\nl Fix the bids of advertisers in $T\setminus S_j$ (or fix the bid of $i$ if $b_i$ reached 0).\;
\nl Repeat (2) and (3), lowering only unfixed bids until all bidders in $T$ are fixed.\;
caption{An algorithm for computing the egalitarian equilibrium in the single-slot first-price auction.}
\end{algorithm}
					
	\begin{lemma}
		Algorithm~\ref{alg:egal} computes the egalitarian equilibrium point.	
	\end{lemma}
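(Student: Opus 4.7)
The plan is to verify that the algorithm's output $b^*$ satisfies two properties: (a) it is a CEF equilibrium with winning ad $T$, and (b) its surplus vector, sorted in increasing order, is lexicographically maximal among all CEF equilibria.

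For (a), CEF and non-negativity are invariants by construction. Initially $b_i = v_i$ for all $i$, which satisfies CEF since $T$ is welfare-maximizing (so $\sum_{i \in T\setminus S_j} v_i \geq \sum_{i\in S_j \setminus T} v_i$). Step (2) explicitly halts uniform lowering the instant any constraint $\sum_{i\in T}b_i \geq \sum_{i\in S_j} b_i$ would be violated or any $b_i$ would become negative. Upon termination, every $k \in T$ either satisfies $b^*_k = 0$ or was fixed at the moment some $S_j \not\ni k$ became tight, meaning $\sum_{i \in T} b^*_i = \sum_{i\in S_j} b^*_i$. This is exactly the equilibrium characterization of Lemma~\ref{equilibriumotherset}.

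For (b), let $F^{(1)}, F^{(2)}, \ldots, F^{(J)}$ denote the batches of bidders fixed in successive iterations, with uniform surpluses $s^{(1)} < s^{(2)} < \cdots < s^{(J)}$. The key lemma to establish is: in any CEF equilibrium $b'$, some bidder in $F^{(1)}$ has surplus at most $s^{(1)}$. Pick any $i \in F^{(1)}$. If $b^*_i = 0$, then IR forces $v_i - b'_i \leq v_i = s^{(1)}$, and $i$ itself witnesses the claim. Otherwise $i$ was fixed because some $S_j$ with $i \in T \setminus S_j$ became tight, giving $\sum_{k\in T\setminus S_j} b^*_k = \sum_{k\in S_j \setminus T} v_k$. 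Applying CEF to $b'$ for the same $S_j$,
\[\sum_{k\in T\setminus S_j} b'_k \;\geq\; \sum_{k\in S_j\setminus T} v_k \;=\; \sum_{k\in T\setminus S_j} b^*_k,\]
so by pigeonhole some $k \in T\setminus S_j \subseteq F^{(1)}$ has $b'_k \geq b^*_k$ and thus surplus at most $s^{(1)}$. Since every bidder in $F^{(1)}$ has surplus exactly $s^{(1)}$ in $b^*$, this shows $b^*$ is optimal in the minimum coordinate.

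For the inductive step, I argue that any lex-optimal equilibrium must in fact assign every bidder in $F^{(1)}$ surplus exactly $s^{(1)}$: granting any $F^{(1)}$ bidder strictly more surplus drops some other $F^{(1)}$ bidder strictly below $s^{(1)}$ by the argument above, contradicting lex optimality. Freezing the $F^{(1)}$ bids at $b^*$ reduces the problem to a smaller coopetitive auction on $T\setminus F^{(1)}$ in which the fixed bids enter CEF as constants on both sides, and the algorithm's subsequent iterations are precisely the uniform-lowering procedure applied to this reduced instance. Induction on $J$ closes the argument. The main subtlety will be confirming that the reduced problem faithfully captures lex-maximization over the later coordinates—specifically, that tight constraints identified in iteration $1$ remain tight throughout (which they do, since both sides are unchanged by modifications to $T \cap S_j$ bids), so fixing $F^{(1)}$ does not artificially restrict the equilibria available for the tail of the surplus vector.
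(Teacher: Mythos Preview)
Your proposal is correct and follows essentially the same two-part structure as the paper: first establish that the output is a CEF equilibrium (via invariants and Lemma~\ref{equilibriumotherset}), then induct on surplus levels to show lex-maximality. Your pigeonhole argument for part~(b)---using the tight CEF constraint for $S_j$ to force $\sum_{k\in T\setminus S_j} b'_k \geq \sum_{k\in T\setminus S_j} b^*_k$ and hence some $k\in T\setminus S_j$ has surplus at most $s^{(1)}$---is a cleaner substitute for the paper's inequality chain, which carries the same content but is written less transparently. One small point worth making explicit in your writeup: when you invoke $\sum_{i\in T} b^*_i = \sum_{i\in S_j} b^*_i$ at termination (rather than merely at the moment $S_j$ first went tight), you are implicitly using that subsequent lowering only touches bidders in $T\cap S_j$, which cancel from both sides---you note this for the inductive step but it is already needed in part~(a).
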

	\begin{proof}
	
		First, we show that the resulting point is efficient and cooperatively envy-free --- then we'll show that the resulting equilibrium must be the egalitarian one.

		Begin with the following claim:
		\begin{claim}
			The total bid for the winning ad $T$ never drops below the total bids of non-winning ads.
		\end{claim}
		At any point, the algorithm lowers only the bids of players in \emph{every} ad tied for the highest 
		value. As a result, every ad tied for the highest value is decreasing by the same amount. Thus, the 
		winning ad at the beginning of the algorithm remains the winning ad at the end and hence the final winning
		ad is the ad with the most surplus. 
		
		At the end of the auction, as $T$ remains the ad with the highest value and no non-winning advertisers
		see their surpluses decrease, the CEF constraints will be satisfied for every alternate ad $S_j$.
	
		 By Lemma \ref{equilibriumotherset} we know that our point will be in equilibrium if and only if 
		 there is a set for every agent tied with the winning set that he is not in. Our algorithm will only 
		 stop when for every bidder there is such a set, or they are bidding $0$ --- hence it must be such an equilibrium.

	Thus, we've argued that the final point is an efficient, CEF point. Now, we discuss whether or not it is in fact the 
	egalitarian solution. We'll prove this with induction. Assume that the algorithm gives the egalitarian bargaining solution for the
	first $i-1$ lowest surplus advertisers. Consider the algorithm after those advertisers are fixed - in particular, the next time 
	an advertiser has their bid fixed, with a surplus of $z$. At this point, there will be a set $S_j$ s.t.
	\begin{equation}
		\sum_{i\in T\setminus S_j} \max (v_i-z, b'_i) = \sum_{i\in T\setminus S_j} v_i
	\end{equation}

	First, note that the algorithm cannot reduce the bid of $i$ further than the egalitarian solution --- otherwise that would be
	the egalitarian solution. Assume now that the algorithm results in a lower surplus for player $i$ --- hence, $i$'s bid is fixed 
	before being lowered to his egalitarian surplus. By our assumption, then, 
	$\sum_{i\in T\setminus S_j} \max (v_i-z, b'_i) > \sum_{i\in T\setminus S_j} b'_i \geq \sum_{i\in T\setminus S_j} b_i$. By our CEF 
	constraints in the egalitarian solution, we have that $\sum_{i\in T\setminus S_j} b_i \geq  \sum_{i\in T\setminus S_j} v_i$. Then 
	$\sum_{i\in T\setminus S_j} \max (v_i-z, b'_i) < \sum_{i\in T\setminus S_j} b_i$ and hence $\sum_{i\in T\setminus S_j} b'_i = \sum_{i\in T\setminus S_j} b_i$. 
	Since all advertisers with surplus less than $z$ have the egalitarian surplus, and all advertisers with more surplus are in $S_j$, $b'_i=b_i$.
	
	\end{proof}

\section{Conclusion and Open Problems}

As we have discussed, a wide variety of ads provide value to more than one party --- ads for computers, cell phones, and even baseball teams to name a few. As evidenced by the  ``samsung intel laptop'' query shown in Figure~\ref{fig:silsearch}, current ad auctions completely fail to account for such shared benefits. While Google may enjoy a little extra revenue at Intel's expense in the interim, Figure~\ref{fig:silsearch} does not represent a sustainable equilibrium --- our theoretical results show that failure to account for the shared benefit of an advertisement can have a substantial negative effect on both welfare and revenue.

As a possible solution, we showed that the coopetitive first-price auction behaves well in equilibrium. In particular, all equilibria that satisfy cooperative envy-freeness are efficient and have good revenue. Moreover, such equilibria can be computed efficiently. Finally, as noted in Section~\ref{sec:prelim}, these results generalize to a model where bidders' have different values for different ads.

The main open question is {\em what is the best way for auctioneers to accommodate the shared value of an ad?} While the first-price auction may have nice equilibria and offer transparency to bidders, it is not clear that it is the best solution. A big downside of a first-price auction is that bidders have an incentive to tweak their bids to lower their payments. In contrast, a small modification to a bid in a GSP mechanism will not affect payments unless it affects the outcome of the auction. Unfortunately, na\"{i}vely building a GSP-style auction will either inherit the downsides of VCG or allow players to affect the distribution of payments by tweaking their bids. First-price auctions also have problems in multi-slot settings, as evidenced by the poor performance of Overture's original first-price auction. Thus, while the properties of first-price equilibria presented herein are desirable, the question of how auctioneers should design real auctions remains unresolved.

\bibliographystyle{plain}
\bibliography{miaa}

\end{document}